\newtheorem{proposition}{Proposition}
\title{OpenMP Parallelization of Dynamic Programming and Greedy Algorithms}
\author{Claude Tadonki\\Mines ParisTech - PSL Research University \\
Centre de Recherche en Informatique (CRI)\\
35, rue Saint-Honor\'e, 77305, Fontainebleau Cedex (France)\\
Email: claude.tadonki@mines-paristech.fr}
\begin{document}
\maketitle
\begin{abstract}
Multicore has emerged as a typical architecture model since its advent and stands now as a standard. The trend is to increase the number of cores and improve the performance of the memory system. Providing an efficient multicore implementation for a important algorithmic kernel is a genuine contribution. From a methodology standpoint, this should be done at the level of the underlying paradigm if any. In this paper, we study the cases of {\em dynamic programming} and {\em greedy algorithms}, which are two major algorithmic paradigms. We exclusively consider directives-based loop parallelization through OpenMP and investigate necessary pre-transformations to reach a regular parallel form. 
We evaluate our methodology with a selection of well-known combinatorial optimization problems on an INTEL Broadwell processor. Key points for scalability are discussed before and after experimental results. Our immediate perspective is to extend our study to the manycore case, with a special focus on NUMA configurations.  
\end{abstract}
 
\section{Introduction}
With the advent and pervasiveness of multicore processors, designing shared memory parallel programs is on the way to routine consideration. OpenMP\cite{openmp} currently stands as a standard for multicore parallelization and genuine efforts are made to make it as powerful as expected. However, the case of irregular algorithms is problematic because of load imbalance at runtime. In addition, programs that have a sequential profile in their original form need appropriate code transformation in order to expose parallelism. This essential preprocessing might out of the skills of an ordinary programmer or might be subject to some reluctance from experts. The main concern is to keep the advantage of a short {\em time-to-code} behind the use of OpenMP, while trying to get a reasonably efficient parallel implementation. This is the context of our work, where we consider two major algorithmic paradigms\cite{gr-dp} namely {\em dynamic programming} and {\em greedy algorithms}.  

{\em Dynamic programming} and {\em greedy algorithms} are widely used to design efficient algorithms for combinatorial optimization problems. The corresponding algorithms have a structural regularity that does not always correspond to the expected regularity neither for the iteration spaces nor for memory accesses. In addition, appropriate loop transformations\cite{loop} should be applied before the parallelization. We investigate this parallel design concern through a selection of well-known problems: {\em shortest paths}\cite{warshall}, {\em graph flooding}\cite{flood}, {\em 0-1 knapsack}\cite{knapsack, knapsack2}, {\em longest common subsequence}\cite{lcs}, {\em longest increasing subsequence}\cite{lis}, and {\em minimum spanning tree}\cite{mst}. For each case study, we explain and discuss the necessary transformation then provide the corresponding pseudo-code. We do not intend to provide a state-of-the-art solution for each problem, but we rather focus on a generic methodology, which is the goal behind this work. The reader should them see it as a step towards a generic parallelization methodology for the two considered paradigms.

The rest of the paper is organized as follows. The next two section explore successively the case of dynamic programming and then greedy algorithms. Each of both sections shortly describes the paradigm, then the selected problems and their OpenMP parallelization, and finished with experimental results. Section 4 concludes the paper.

\section{The case of dynamic programming}
\subsection{Definition and selected cases}
Dynamic programming (usually referred to as DP ) is a well-known paradigm mostly used for (but not restricted to) discrete optimization problems. From a given input $S$, dynamic programming works iteratively in a finite number of computing steps of the form
\begin{equation}
S_{k+1}=f(k,S_k),
\end{equation}
where $f$ is the generic iteration function and $k$ the iterator parameter. It is common to consider in-place computation, thus the procedure works by means of iterative updates. Table \ref{tab_dp} provides a selection of well-known dynamic programming cases.
\begin{table*}
\begin{tabular}{|r|l|l|l|l|l|}
\hline
{\bf N°} & {\bf Problem} & {\bf Algorithm} & {\bf Generic Update}\\
\hline
1 & Shortest Paths & Floyd-Warshall & $m_{i,j}=\min (m_{i,j},m_{i,k}+m_{k,j})$\\
\hline
2 & Dominated Graph Flooding & Berge & $\tau_i = \min(\tau_i, max(v_{i,j}, \tau_j)$\\
\hline
3 & 0-1 Knapsack Problem & Standard DP & $t_{i,w} = \max(t_{i-1,w},v_{i-1}+t_{i-1,w-w_i})$\\
\hline
4 & Longest Common Subsequence & Standard DP & $c_{i,j} =  
\left\{
\begin{array}{ll}
c_{i-1,j-1} + 1 & \mbox{if } (s_i=t_i),    \\
\max(c_{i-1,j},c_{i,j-1}) & \mbox{otherwise}\\
\end{array}
\right.$\\
\hline
5 & Longest Increasing Subsequence & Standard DP & $l_i = \max(l_i, l_j+1) \mbox{ if } (a_i>a_j)$\\
\hline
\end{tabular}
\caption{\label{tab_dp}Selected dynamic programming cases}
\end{table*}

We now briefly describe each of the selected problems and provide an ordinary shared memory parallelization. 
\subsection{The 0-1 Knapsack Problem}
Given a set of items, each with a mass and a value, the Knapsack Problem is to determine the number of each item (0 or 1 for the 0-1 Knapsack) to include in a collection so that the total weight is less than or equal to a given limit and the total value is as large as possible. It derives its name from the problem faced by someone who is constrained by a fixed-size knapsack and must fill it with the most valuable items. An ordinary dynamic programming procedure to get the solution is given by the algorithm in figure \ref{knp}, where $n$ is the total number of items, $v$ (resp. $w$) the array of corresponding values (resp. weight), and $W$ the weight limit of the selected items. The solution is constructed inside matrix $V$, whose first row (resp. first column) is initialized with $0$. The value in $V$[$i,j$] represents the solution of the subproblem with items $\{1,2,\cdots, i\}$ and total weight $j$. The final solution is thus in $V$[$n,W$], which is the total value of the selected subset.
\begin{figure}[H]
{\bf Knapsack}($v,w,n,W$)\{ 

\ \ {\bf for}($i=1;i\leq n;j$++) 

\ \ \ \ {\bf for}($j=1;j\leq W;j$++) 

\ \ \ \ \ \ {\bf if}($w$[$i$]$\leq j$)

\ \ \ \ \ \ \ \ $V$[$i,j$]$=\max$\{ $V$[$i-1,j$],$v$[$i$]+$V$[$i-1,j-w$[$i$]]  \};

\ \ \ \ \ \ {\bf else}

\ \ \ \ \ \ \ \ $V$[$i,j$]=$V$[$i-1,j$];

\ \ {\bf return} $V$[$n,W$];

\}
\caption{\label{knp}Dynamic programming algorithm for the 0-1 Knapsack Problem}
\end{figure}

We see that all dependencies are of the form $(i,j)\leftarrow (i-1,j-\lambda)$, which guarantees that for a fix $i$, all updates along $j$-axe can be performed in parallel (provided all calculations at the level of $i-1$ has been completed). In addition, the one-step lifetime of variables $V$($i,:$) suggest to compress along $i$-axe by storing $V$($i,:$) at $V$($i\bmod 2,:$), thus using a $2W$ array instead of $n\times W$. This yields the OpenMP version displayed in figure \ref{knp-omp}. 
\begin{figure}[H]
{\bf Knapsack}($v,w,n,W$)\{ 

\ \ {\bf for}($i=1;i\leq n;j$++) 
        
\ \ \ \ {\tt \#pragma omp parallel for}
        
\ \ \ \ {\bf for}($j=1;j\leq W;j$++) 

\ \ \ \ \ \ {\bf if}($w$[$i$]$\leq j$)

\ \ \ \ \ \ \ \ $V$[$i\%2,j$]$=\max$\{ $V$[$(i-1)\%2,j$],$v$[$i$]+$V$[$i-1,j-w$[$i$]]  \};

\ \ \ \ \ \ {\bf else}

\ \ \ \ \ \ \ \ $V$[$\%2i,j$]=$V$[$(i-1)\%2,j$];

\ \ {\bf return} $V$[$n,W$];

\}
\caption{\label{knp-omp}OpenMP Loop for the 0-1 Knapsack Problem}
\end{figure}
\subsection{Dominated Graph Flooding}
Given weighted undirected graph $G=(X,E,v)$ and a ceiling function $\omega:X\rightarrow \mathbb{R}$. A valid flooding function of $G$ under the ceiling constraint $\omega$ is the \underline{maximal} function $\tau:X\rightarrow \mathbb{R}$ satisfying
\begin{equation}\label{flood-canon}
\forall x,y\in X:\quad \tau(x) \leq \min (\max(v(x,y), \tau(y)) \mbox{, } \omega(x)).
\end{equation}
It can be shown that such a function $\tau$ satisfies equation (\ref{flood-opt})
\begin{equation}\label{flood-opt}
\forall x,y\in X:\quad \tau(x) = \min (\max(v(x,y), \tau(y)) \mbox{, } \omega(x)).
\end{equation}
A dynamic programming algorithm to compute $\tau$ was proposed by C. Berge. For a given graph with $n$ vertices, valuation $v=(v_{ij})$ and ceiling $\omega=(\omega_{ij})$, Berge algorithms computes the flooding $\tau=(\tau_{i})$ as follows:\\
(i) $\tau^{(0)}\leftarrow \omega$ \\
(ii) repeat update (\ref{flood-berge}) until ($\tau_i^{(k)}=\tau_i^{(k-1)}$)
\begin{equation}\label{flood-berge}
\tau_i^{(k)} = \min(\tau_i^{(k)}, max(v_{i,j}, \tau_j^{(k-1)}), i=1,2,\cdots, n.
\end{equation}
The corresponding code, where $\tau_i^{(k)}$ is stored at $\tau(k\bmod 2,:)$, is provided in figure \ref{berge-omp}. The computations of the components of $\tau_i^{(k)}$ are independent of each other, thus the corresponding loop can be freely parallelized as done through OpenMP directive.   
\begin{figure}[H]
\begin{Verbatim}[fontsize=\small]
while(doIt==1){
  #pragma omp parallel for private(j)
  for(i=0;i<n;i++){
    h[k%2,i] = h[(k+1)%2,i]; 
    for(j=0;j<n;j++)
      h[k%2,i] = min(h[k%2,i], 
               max(G[i,j],h[(k+1)%2,j]));
  }
  doIt=0; 
  for(i=0;i<n;i++) 
     if(h[1,i] != h[0,i]) {doIt=1; break;}
  k++;
}
\end{Verbatim}
\caption{\label{berge-omp}OpenMP Loop for the Graph Flooding Problem}
\end{figure}

\subsection{Shortest Paths}
This a well-know classical graph problem. The problem is to find shortest distances between every pair of vertices in a given edge-weighted directed Graph, which does not contain any cycles of negative length. For a given graph of order $n$, represented by a $n\times n$ distances matrix $M=(m_{ij})$, where $m_{ii}=0$ and $m_{ij}=+\infty$ if there is no connection between $i$ and $j$, the Floyd-Warshall algorithm iteratively computes the matrices $M^{(k)}$ of the shortest paths that only consider the vertices in $\{1,2,\cdots, k\}$, $k=1,2,\cdots, n$. The corresponding code considering in-place computation is provided in figure \ref{fw}, where the computation of each step $k$ is executed in parallel (row-wise).
\begin{figure}[H]
\begin{Verbatim}[fontsize=\small]
for(k=0;k<n;k++)
 #pragma omp parallel for private(j)
 for(i=0;i<n;i++)
  for(j=0;j<n;j++)
   M[i,j] = min(M[i,j],M[i,k]+M[k,j]);
\end{Verbatim}
\caption{\label{fw}OpenMP Loop for the Floyd-Warshall procedure}
\end{figure}

Since row $k$ and column $k$ (the pivots) remain unchanged after step $k$, the corresponding loop can be executed in parallel. 

\subsection{Longest Common Subsequence}
%      /* common[i][j] represents length of the longest common sequence in S[1..i], T[1..j] */
%     /* Recurrence:  common[i][j] = common[i-1][j-1] + 1 if S[i]==T[j]
%      = max(common[i-1][j],common[i][j-1]) otherwise
%      */
%     /*common[0][i]=0, for all i because there are no characters from string S*/
%     /*common[i][0]=0, for all i because there are no characters from string T*/
Given two finite sequences of numbers, the {\em Longest Common Subsequence} (LCS) problem is to find the (length of the) longest common contiguous subsequence. The problem is commonly related to strings. A basic dynamic programming algorithm for this problem proceeds as follows. Given two sequences $(u_i)_{i=1,\cdots,n}$ and $(v_i)_{i=1,\cdots,m}$, we define $c_{ij}$ as the length of the LCS in $(u_1,\cdots,u_i)$ and $(v_1,\cdots,v_j)$. We have
\begin{equation}
c_{ij} = 
\left\{
\begin{array}{ll}
    c_{i-1, j-1}+1 &if \mbox{ } u_i=v_j\\
    \max(c_{i-1, j},c_{i, j-1})&otherwise
\end{array}
\right.
\end{equation}
This yields the code of figure \ref{lcs}.
\begin{figure}[H]
\begin{Verbatim}[fontsize=\small]
for(i=1;i<n;i++)
 for(j=1;j<n;j++)
  if(S[i] == T[j]) c[i,j] = c[i-1,j-1]+1;
  else c[i,j] = max(c[i,j-1],c[i-1,j]);
\end{Verbatim}
\caption{\label{lcs}Loop for the LCS procedure}
\end{figure}
As it is, the loop cannot be parallelized. Indeed, the dependence $(i,j)\leftarrow (i-1,j-1)$ constraints both $i$ and $j$ axes. One way to overcome this is to consider the loop skewing transformation, were the computation are done following the hyperplanes $i+j=k$, $k=2,\cdots, 2(n-1)$, each of which being parallel. This yields the OpenMP code of figure \ref{lcs-omp}. 
\begin{figure}[H]
\begin{Verbatim}[fontsize=\small]
for(k=2;k<=n;k++)
  #pragma omp parallel for
  for(i=1;i<k;i++){
   if(S[i] == T[(k-i)]) 
     c[w(i,k-i)] = c[w(i-1,(k-i)-1)]+1;
   else 
     c[w(i,(k-i))] = max(c[w(i,(k-i)-1)],
                        c[w(i-1,(k-i))]);
  }
for(k=n+1;k<=2*(n-1);k++)
  #pragma omp parallel for
  for(i=(k-n)+1;i<n;i++){
   if(S[i] == T[k-i]) 
      c[i,k-i] = c[i-1,(k-i)-1]+1;
   else 
      c[i,(k-i)] = max(c[i,(k-i)-1],
                          c[i-1,(k-i)]);
  }
\end{Verbatim}
\caption{\label{lcs-omp}OpenMP Loop for the LCS procedure}
\end{figure}

\subsection{Longest Increasing Subsequence}
Given a finite sequence of numbers, the {\em Longest Increasing Subsequence} (LIS) problem is to find the (length of the) longest of its subsequences. The basic idea of a dynamic programming for this case is that, given an increasing subsequence and a new element out of it, we can form a new increasing subsequence (with one more element) if that element is greater that the last element (the greatest) of the subsequence. Thus, with a sequence of $n$ numbers $a_1,a_2,\cdots,a_n$, if we define $l_i$ as length of the LIS restricted to $a_1,a_2,\cdots,a_i$ and ending with $a_i$, then we have
\begin{equation}
l_i = \max_{\{1\leq j\leq i-1:a_i>a_j\}}(l_j+1).
\end{equation}
This yields the dynamic programming procedure described in figure \ref{lis}, where {\tt LS[]} is initialized with {\tt 1}. Note that the global solution is the maximum of {\tt LS[]}.
 \begin{figure}[H]
\begin{Verbatim}[fontsize=\small]
for(i=0;i<n;i++)
  for(j=0;j<i;j++)
    if(a[i] > a[j]) 
       LS[i] = max(LS[i], LS[j]+1);
\end{Verbatim}
\caption{\label{lis}Dynamic programming loop for the LIS problem}
\end{figure}
We can see that the process is strongly sequential like the {\em prefix  computation}\cite{-} and none of the loop levels in figure \ref{lis} can be parallelized. To fix this, we consider:\\
$l_i$: length of the longest increasing subsequence going to (ending with) $a_i$\\
$s_i$: length of the longest increasing subsequence coming from (starting with) $a_i$\\
$d_i$: length of the longest increasing subsequence passing through (containing) $a_i$\\
We have
\begin{equation}
s_i = \max_{\{i-1\leq j\leq n:a_i<a_j\}}(l_j+1),
\end{equation}
and
\begin{equation}
d_i = \max_{\{1\leq j\leq i-1:a_i>a_j\}}(l_j+s_i).
\end{equation}
For a given $k$, $1<k<n$, for $i\in \{k+1,\cdots, n\}$, we define:\\ $d^{(k)}_i$: length of the LIS passing through $a_i$ excluding items in $\{ a_{k+1},\cdots,a_{i-1}\}$. We have
\begin{equation}
d^{(k)}_i = \max_{\{1\leq j\leq k:a_i>a_j\}}(l_j+s_i).
\end{equation}

\begin{proposition} For any $k$, $1<k<n$, we have 
$$\max_{\{k< i\leq n\}} \{d_i\} = \max_{\{k< i\leq n\}} \{d^{(k)}_i\}$$.
\end{proposition}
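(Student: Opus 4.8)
The plan is to prove the asserted equality by establishing the two inequalities separately, since one of them is essentially free. For every fixed index $i$ with $k<i\le n$, the quantity $d^{(k)}_i$ is, by construction, the length of an increasing subsequence through $a_i$ whose predecessors are confined to indices $\le k$; this is merely a constrained instance of an increasing subsequence through $a_i$, so $d^{(k)}_i\le d_i$ termwise. Taking the maximum over the admissible range immediately yields
$$\max_{\{k<i\le n\}}\{d^{(k)}_i\}\le \max_{\{k<i\le n\}}\{d_i\}.$$
Hence the real content lies in the reverse inequality, which I would obtain by a cut-and-paste (surgery) argument on an optimal subsequence, exploiting the natural splitting of any increasing run at the distinguished vertex $a_i$.

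For the reverse direction, let $i^{\ast}$ attain $\max_{\{k<i\le n\}}\{d_i\}$ and fix an optimal increasing subsequence passing through $a_{i^{\ast}}$, carried by indices $p_1<p_2<\cdots<p_m$ with $a_{p_1}<\cdots<a_{p_m}$ and $m=d_{i^{\ast}}$. Because $i^{\ast}>k$ occurs among the $p_r$, at least one index of this set exceeds $k$; let $p_t$ be the \emph{smallest} such index and set $i'=p_t$, so that $p_1,\dots,p_{t-1}\le k<p_t\le n$. The purpose of selecting the first index beyond $k$ is that the optimal subsequence then splits cleanly at $a_{i'}$ into a prefix $p_1,\dots,p_{t-1}$ lying entirely in $\{1,\dots,k\}$ and a suffix $p_t,\dots,p_m$ starting at $a_{i'}$, with no element straddling the cut.

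It remains to bound the two halves and recombine them. The suffix $a_{p_t}<\cdots<a_{p_m}$ is an increasing subsequence starting at $a_{i'}$, giving $s_{i'}\ge m-t+1$; the prefix $a_{p_1}<\cdots<a_{p_{t-1}}$ terminates at index $p_{t-1}\le k$ with $a_{p_{t-1}}<a_{i'}$, so that $\max_{\{1\le j\le k:\,a_{i'}>a_j\}} l_j\ge l_{p_{t-1}}\ge t-1$. Adding these and invoking the definition of $d^{(k)}_{i'}$ produces $d^{(k)}_{i'}\ge (m-t+1)+(t-1)=m=d_{i^{\ast}}$, and since $k<i'\le n$ this index is eligible for the right-hand maximum, completing the inequality. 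The main obstacle is the bookkeeping at the cut point: one must verify that $p_t$ always exists (it does, as $i^{\ast}>k$ lies in the set), treat the degenerate case $t=1$ where the prefix is empty and the bound degrades to $d^{(k)}_{i'}\ge s_{i'}\ge m$ (consistent with the empty-maximum convention $d^{(k)}_{i'}=s_{i'}$), and confirm that the prefix and suffix lengths recombine to \emph{exactly} $m$ — that is, that $a_{i'}$ is counted once through $s_{i'}$ and never duplicated by the prefix, so the $-1$ adjustment implicit in gluing at $a_{i'}$ is handled correctly rather than merely bounding the length loosely.
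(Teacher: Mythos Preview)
Your proof is correct and follows essentially the same approach as the paper: both arguments obtain the easy inequality from $d^{(k)}_i\le d_i$, and for the reverse direction locate, within an optimal increasing subsequence through some $a_i$ with $i>k$, the first index exceeding $k$ and argue that $d^{(k)}$ at that index recovers the full length. The only cosmetic difference is that the paper carries this out for every $i>k$ (finding for each a $j$ with $d^{(k)}_j\ge d_i$), whereas you work directly with the maximizer $i^\ast$; your bookkeeping at the cut point and the degenerate case $t=1$ are handled more explicitly than in the paper.
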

\begin{proof}
It is obvious that $d_i\geq d^{(k)}_i$ since $d^{(k)}_i$ considers a subset of the values related to $d_i$. Thus 
\begin{equation}
\max_{\{k< i\leq n\}} \{d_i\} \geq \max_{\{k< i\leq n\}} \{d^{(k)}_i\}.
\end{equation}
For any $i$, $k<i\leq n$,  let show that there is $j$, $k<j\leq n$, such that $d^{(k)}_j\geq d_i$. Let $i\in \{k+1,\cdots,n\}$:
\begin{itemize}
\item If the longest increasing subsequence passing through $a_i$ does not contains any element in $\{ a_{k+1},\cdots,a_{i-1}\}$ then $d^{(k)}_i= d_i$ (i.e $j=i$).
\item Otherwise, let $j$ be the smallest index in  $\{ k,\cdots,{i-1}\}$ such that $a_j$ belongs to he longest increasing subsequence passing through $a_i$. Thus, this subsequence does not contains any elements in  $\{ a_{k+1},\cdots,a_{j-1}\}$, we have $d^{(k)}_j= d_i$. 
\end{itemize}
Thus, for any $i$, there is $j$ such that $d^{(k)}_j\geq d_i$, which leads
\begin{equation}
\max_{\{k< i\leq n\}} \{d_i\} \leq \max_{\{k< i\leq n\}} \{d^{(k)}_i\}.
\end{equation}
\end{proof}
\noindent Note that $d^{(k)}_i$, $k$ fixed and $i=k+1,k+2,\cdots, n$, are independent to each other, thus can be computed in parallel.
Since the length of the LIS is given by
\begin{equation}
\max(\max_{\{1\leq i\leq k\}}\{l_i\},\max_{\{k< i\leq n\}}\{d^{(k)}_i\}),
\end{equation}
the steps of algorithm to compute the LIS is
\begin{itemize}
\item compute $l_i$, $i=1,2,\cdots, k$
\item compute $s_i$, $i=k+1,k+2,\cdots, n$
\item compute $d^{(k)}_i$, $i=k+1,k+2,\cdots, n$
\item compute $\max(\max_{\{1\leq i\leq k\}}\{l_i\},\max_{\{k< i\leq n\}}\{d^{(k)}_i\})$
\end{itemize}
The first two steps can be perform independently, and the last step is fully parallel. This yields the OpenMP code provided in figure \ref{lis-omp}.
\begin{figure}[H]
\begin{Verbatim}[fontsize=\small]
#pragma omp sections private(i,j)
{		
  #pragma omp section
  for(i=0;i<n/2;i++)
   for(j=0;j<i;j++)
     if(a[i] > a[j]) 
        LS[i] = max(LS[i], LS[j]+1);
            
  #pragma omp section
  for(i=n-1;i>=n/2;i--)
    for(j=n-1;j>i;j--)
      if(a[i] < a[j]) 
         LS[i] = max(LS[i], LS[j]+1);
} 
#pragma omp parallel for  private(v,j)
  for(i=n/2;i<n;i++){
     v = LS[i];
     for(j=0;j<n/2;j++)
     if(a[i] > a[j]) 
        LS[i] = max(LS[i], LS[j]+v);        
  }
\end{Verbatim}
\caption{\label{lis-omp}OpenMP parallelization the LIS problem}
\end{figure}
\subsection{Performance evaluation and related observations}
We evaluate our methodology for dynamic programming on our selected case studies using height 2.20 GHz-cores of an INTEL Broadwell E/P. The graphs are randomly generated with various sizes and different levels of density. Table \ref{res_dp} displays our experimental results.
\begin{table*}
\begin{tabular}{|l|r||r|r|r|r|r|r|r|r|r|}
\cline{4-10}
\multicolumn{3}{c}{} & \multicolumn{7}{|c|}{Number of cores (speedup)}\\   
\hline
{\bf Problem} & {\bf N} & {\bf Seq T(s)} & {\bf 2} & {\bf 3} & {\bf 4} & {\bf 5} & {\bf 6} & {\bf 7}& {\bf 8}\\
\hline
KNAPSACK & 10000 &   1.422  & 1.97 & 2.93 & 3.86 & 4.76 & 5.60 & 6.44 & 7.19 \\
  \hline
WARSHALL &  1000 &   0.942  & 1.99 & 2.98 & 3.96 & 4.94 & 5.90 & 6.86 & 7.81 \\
  \hline
LIS &10000 &   0.205  & 1.35 & 1.52 & 1.63 & 1.70 & 1.75 & 1.79 & 1.82 \\
  \hline     
LCS &10000 &   0.575  & 2.00 & 3.15 & 4.28 & 5.19 & 5.77 & 6.26 & 6.62 \\
  \hline
BERGE &1000 &   0.022  & 1.99 & 2.96 & 3.94 & 4.89 & 5.84 & 6.66 & 7.49 \\
\hline
\end{tabular}
\caption{\label{res_dp}Experimental results our dynamic programming parallelization}
\end{table*}
We can see that speedups are quite good for all cases except the LIS, which is bounded by its strongly sequential part despite our transformation for a better parallelization. The maximum speedup for the LIS following the analysis of our parallelization seems to be 2 and we can see that we are moving to that limit. We emphasize on the fact that we are in a context of directives-based parallelization, which is more simpler from the programming standpoint but has a natural limitation exacerbated with irregular or strongly sequential applications.   

\section{The case of greedy algorithm}
\subsection{Definition and selected cases}
Greedy algorithm is an algorithmic paradigm mainly used for discrete optimization problems. The basic idea is to iteratively populate the solution space by adding the best known candidate at each step. From the algorithmic viewpoint, the key is the selection process, which should definitely lead to the expected solution. From the complexity viewpoint, the key is the efficiency of the selection, which should be implemented at the best (memory and/or processing time). From a given input set $E$, the generic step of a greedy algorithm is of the form
\begin{equation}
S_{k+1}=S_k\cup f(E-S_k),
\end{equation}
where $f$ is the generic selection function.  Table \ref{tab_ga} provides a selection of well-known greedy algorithms.
\begin{table*}
\begin{tabular}{|r|l|l|l|l|l|}
\hline
{\bf N°} & {\bf Problem} & {\bf Algorithm} & {\bf Generic Selection}\\
\hline
1 & Shortest Paths (from a source node $s$) & Dijkstra & $\displaystyle i_{k+1} = \min_{i\in E-S_k}dist(s,i)$\\
\hline
2 & Minimum Spanning Tree & Prim & $\displaystyle a_{k+1} = \min_{i\in S_k,j\in E-S_k}m_{i,j}$\\
\hline
3 & Dominated Graph Flooding & Moore-Dijkstra & $\displaystyle i_{k+1} = \min_{i\in E-S_k}\tau_i$\\
\hline
\end{tabular}
\caption{\label{tab_ga}Selected greedy algorithm cases}
\end{table*}
Typically, each selection is followed by a generic update of the pivot information (Prim's algorithm does not have one). We now examine each of the selected cases.
\subsection{Shortest Paths}
The problem here is to compute the shortest paths from a given fixed node. The greedy algorithm from Dijkstra select the closest node from the remaining ones regardless of the valuations of their arcs. Thus, the distances (from the source node) are updated through the inspection of the potential changes induced by the selected node. Figure \ref{gsp}  illustrates the main phase of the algorithm. We assume that the source node is $0$ and distances vector $d$ has been initialized with $0$. In addition, as we want to have the range {\tt [1..p-1]} (resp. {\tt [p..n-1]}) for the set of selected (resp. remaining) nodes, we use an array {\tt nd} such that {\tt nd[k]} is the id of the node at position {\tt k}, thus the corresponding indirections.  The ultimate inner loop is the update of the distances after the new selection.
 \begin{figure}[H]
\begin{Verbatim}[fontsize=\small]
for(p=1;p<n;p++){
  k = p;
  for(i=p;i<n;i++)
    if(d[nd[i]] < d[nd[k]]) k = i;
  swap(nd[k],nd[p]);
  k = nd[p]; 
  for(j=0;j<deg[k];j++){
    i = g[k][j];
    if(d[i] > d[k]+m[i,k]) d[i] = d[k]+m[i,k];
  }
}
\end{Verbatim}
\caption{\label{gsp}Dijkstra greedy algorithm for the shortest paths}
\end{figure}
A close look at this algorithm allows to realize that only the update loop (the last inner loop) can be parallelized directly. This might be typical with greedy algorithm, where the update phase is usually individual. For a better efficiency, the selection loop can be made (directly) parallelizable through explicit blocking as follows. The whole search space is divided into equal size blocks. Then, the selection is made within each block and stored in a (global) array at a position that corresponds to the id of that block. Afterwards, a final reduction is made through the global array of local selections in order to get the global one. The corresponding transformation is expressed in figure \ref{psel}.
 \begin{figure}[H]
\begin{Verbatim}[fontsize=\small]
 for(j=0;j<b;j++){
    m = p + j*((n-p)/b);
    k = m;
    for(i=m;i<m+((n-p)/b);i++)
       if(d[nd[i]] < d[nd[k]]) k=i;
    ind[j] = k;
    val[j] = d[k];
 }
 k = 0;
 for(j=0;j<b;i++)
   if(val[j] < d[k]) k=j; 
 k = ind[k]
\end{Verbatim}
\caption{\label{psel}Parallelizable version of the selection process}
\end{figure}
We think that this transformation can be generalized in the context of greedy algorithms as the selection process is typically based on an \underline{associative} operation. The loop transformation described above is essentially based on the associativity of the underlying operation. We finally the OpenMP code provided in figure \ref{gsp-omp}.
 \begin{figure}[H]
\begin{Verbatim}[fontsize=\small]
for(p=1;p<n;p++){
 #pragma omp parallel for  private(s,k,i)
 for(j=0;j<b;j++){
    s = p + j*((n-p)/b);
    k = s;
    for(i=s;i<s+((n-p)/b);i++)
      if(d[nd[i]] < d[nd[k]]) k=i;
    ind[j] = k;
    val[j] = d[k];
 }
 k = 0;
 for(j=0;j<b;j++)
    if(val[j] < d[k]) k=j; 
 k = ind[k]
 swap(nd[k],nd[p]);
 k = nd[p]
 #pragma omp parallel for
 for(j=0;j<deg[k];j++){
    i = g[k][j];
    if(d[i] > d[k]+m[i,k]) d[i] = d[k]+m[i,k];
 }
 }
\end{Verbatim}
\caption{\label{gsp-omp}Dijkstra greedy algorithm for the shortest paths}
\end{figure}
\subsection{Minimum Spanning Tree}
A spanning tree of given a weighted undirected graph $G$ is a subgraph $H$ such that
\begin{itemize}
\item $H$ is a subgraph of $G$ (every edge of $H$ belongs to $G$)
\item $H$ spans $G$ (they have the same set of vertices)
\item $H$ is a tree (connected and acyclic)
\end{itemize}
A {\em minimum spanning tree} (MST) is a spanning tree with the minimum cost (sum of the weights of all the edges) among all possible spanning trees. Prim greedy algorithm to build an MST starts with a single node and iteratively select an external node with the minimum distance to the current MST. The algorithm can be written as displayed in figure \ref{gprim}. Node 0 is assume to be the first one to be selected, hence its distance is initially set to 0 and the others to infinity (array {\tt d}). As with the shortest paths, as we want to have the range {\tt [0..p-1]} (resp. {\tt [p..n-1]}) for the set of selected (resp. remaining) nodes, we use an array {\tt nd} such that {\tt nd[k]} is the id of the node at position {\tt k}.
 \begin{figure}[H]
\begin{Verbatim}[fontsize=\small]
d[0] = 0
for(i=1;i<n;i++) d[i] = INFINITY
for(p=0;p<n;p++){
    k = p
    for(i=p;i<n;i++)
       if(d[nd[i]] < d[nd[k]]) k = i;
    swap(nd[k],nd[p]);
    k = nd[p]
    for(j=0;j<deg[k];j++){
       i = g[k][j];
       if(d[i] > m[k,i]) d[i] = m[i,k];
    }
 }
\end{Verbatim}
\caption{\label{gprim}Prim greedy algorithm for the MST}
\end{figure}
We realize that Prim and Dijkstra algorithms have exactly the same structure, thus the same parallelization remarks and techniques applied. This holds also for Moore-Dijkstra algorithm for dominated graph flooding.
\subsection{Technical observations about the update procedure}
Note that in all of our selected greedy algorithms, the update procedure is relevant only for the neighborhood of the currently selected node. Thus, we can use:
\begin{itemize}
\item a binary array {\tt sel} to indicate whether a node has yet been selected or not
\item an array {\tt deg} for the degrees of the vertices
\item a array {\tt g[]} for the neighborhood of the vertices ({\tt g[i][j]}  is the $j{th}$ neighbor of node $i$)
\item a array {\tt m[]} for the weighted of the neighborhood of the vertices ({\tt m[i][j]}  is the weight of the edge with the $j{th}$ neighbor of node $i$)
\end{itemize}
The update procedure is now as in figure \ref{opup}.
 \begin{figure}[H]
\begin{Verbatim}[fontsize=\small]
for(j=0;j<deg[k];j++){
   i = g[k][j];
   if((sel[i]==0)&&(d[i]>m[k][j])) d[i] = m[k][j];
 }
\end{Verbatim}
\caption{\label{opup}Optimal form of the update process}
\end{figure}
The loop is still fully parallel, but its parallelization should be managed dynamically. Indeed, the parallelization is worth considering only if {\tt deg[k]} is large enough. The solution here is to estimate a convenient number of threads to be used for the loop parallelization directive. We can consider the sequence in figure \ref{par_opup}, where {\tt  c} is an arbitrary chunk size (to be evaluated empirically).
 \begin{figure}[H]
\begin{Verbatim}[fontsize=\small]
np = min(ceil(deg[k]∕c), nthreads
#pragma omp parallel for num_threads(np)
for(j=0;j<deg[k];j++){
   i = g[k][j];
   if((sel[i]==0)&&(d[i]>m[k][j])) 
                      d[i] = m[k][j];
 }
\end{Verbatim}
\caption{\label{par_opup}Optimal parallel update procedure}
\end{figure}
\subsection{Performance evaluation and related observations}
We evaluate our methodology for the greedy algorithms paradigm on the MST case using height 2.20 GHz-cores of an INTEL Broadwell E/P. The graph is randomly generated with various sizes and different levels of density. Table \ref{res_mst} displays our experimental results 
\begin{table}[H]
\begin{minipage}[l]{1.0\columnwidth}
\begin{tabular}{|r|r|r||r|r|r|r|r|r|r|r|r|}
\cline{5-11}
\multicolumn{4}{c}{} & \multicolumn{7}{|c|}{Number of cores (speedup)}\\   
\hline
{\bf N°} & {\bf N} & {\bf Degrees} & {\bf Seq T(s)} & {\bf 2} & {\bf 3} & {\bf 4} & {\bf 5} & {\bf 6} & {\bf 7}& {\bf 8}\\
\hline
1 & $10^5$ & [20:100] & 4.152 & 1.90 & 2.72 & 3.46 & 4.13 & 4.53 & 5.00 & 5.46 \\
\hline
2 & $10^5$ & [10:20] & 4.107 & 1.93 & 2.79 &  3.56 & 4.24 & 4.79 & 5.29 & 5.53 \\
\hline
3 & $2\times 10^5$ & [10:20] & 16.283 & 1.96 & 2.88 & 3.77 & 4.58 & 5.35 & 6.02 & 6.63 \\
4 & $4\times 10^5$ & [10:20] & 64.689 & 1.97 & 2.93 & 3.85 & 4.74 & 5.59 & 6.39 & 7.21 \\
\hline
\end{tabular}
\caption{\label{res_mst}Experimental results of our MST parallelization}
\end{minipage}
\end{table}

We can see that speedups are quite good and do no really depend on the density of the graph. Indeed, the processing time for the update procedure is negligible compare to that of the selection. We even realize that using several threads for this step is likely to degrade the speedup, certainly because we only paid for threads creation and management. However, we still recommend to keep in mind our initial analysis as some other problems might raise different complexity profiles.  
\section{Conclusion}\label{conc}
Our aim in this work was to study the parallelization of dynamic programming and greedy algorithms using directives-based paradigm. The motivation is that most of shared memory parallelizations are made through OpenMP and our two algorithmic paradigms cover a wide range of important combinatorial optimization kernels. It looks clear that applying some loop transformations is necessary in order to create or improve the parallelization potential of the original code. Depending on the specificities of the considered paradigm and the input scenario, it might useful to control the number of working threads in order to avoid speedup degradation due to the overhead of the parallelization. This discussion is important, especially in the perspective of manycore implementation with a larger number of threads. This is exactly what we are now investigating next to the current work, taking into account the challenging aspect of NUMA configurations.

\end{document}